\documentclass[journal,twoside,web]{ieeeconf}

\usepackage{cite}
\usepackage{amsmath,amssymb,amsfonts}
\usepackage{algorithm, algpseudocode}
\usepackage{graphicx,booktabs,colortbl}
\usepackage{textcomp}

\DeclareMathOperator*{\argmin}{arg\,min}
\def\BibTeX{{\rm B\kern-.05em{\sc i\kern-.025em b}\kern-.08em
    T\kern-.1667em\lower.7ex\hbox{E}\kern-.125emX}}
\usepackage[english]{babel}
\usepackage{mathrsfs}

\newcommand{\R}{{\mathbb R}}
\newcommand{\e}{{\varepsilon}}
\newcommand{\T}{\mathrm T}
\newcommand{\tail}{\mathcal T}
\newcommand{\head}{\mathcal H}
\newcommand{\spec}{\mathrm{spec}}

\usepackage{dsfont}
\newcommand{\mathbbm}{\mathds}

\newtheorem{propo}{Proposition}
\newtheorem{defi}{Definition}

\newtheorem{coro}{Corollary}
\newtheorem{remark}{Remark}
\newtheorem{theorem}{Theorem}

\begin{document}
\title{Optimal pinning control of directed hypergraphs}
\author{Fabio Della Rossa,
Davide Liuzza,
Francesco Lo Iudice, and
Pietro De Lellis, \IEEEmembership{Senior Member, IEEE}
\thanks{This study was carried out within the 2022FHHHPC «The Structure, Dynamics and Control of Network Systems With Higher-Order Interactions» and the 2022K8EZBW «Higher-order interactions in social dynamics with application to monetary networks» projects – funded by European Union – Next Generation EU  within the PRIN 2022 program (D.D. 104 - 02/02/2022 Ministero dell’Università e della Ricerca). This manuscript reflects only the authors’ views and opinions and the Ministry cannot be considered responsible for them.
The work of Davide Liuzza was supported in part by the University of Sannio—“Finanziamento della Ricerca di Ateneo (FRA).}
\thanks{P. De Lellis is with the Department of Electrical Engineering and Information Technology, University of Naples Federico II, 80125, Naples, Italy (e-mail: pietro.delellis@unina.it).}
\thanks{F. Della Rossa is with the Department of Electronics, Information, and Bioengineering, Politecnico of Milan, Italy (e-mail: fabio.dellarossa@polimi.it).}
\thanks{F. Lo Iudice is with the Department of Electrical Engineering and Information Technology, University of Naples Federico II, 80125, Naples, Italy (e-mail: francesco.loiudice2@unina.it).}
\thanks{D. Liuzza is with the Department of Engineering, University of Sannio, Benevento, Italy.}}

\maketitle 

\begin{abstract}
Identifying the nodes that must be directly controlled to steer a network along a desired trajectory remains an open problem for digraphs, and even more so for hypergraphs. In this manuscript, we investigate network systems coupled via directed hypergraphs and consider a broad class of individual dynamics and coupling configurations, extending the definition of type II networks originally formulated for digraphs. For this class of networks with higher-order interactions, we establish necessary and sufficient conditions under which a pinning selection locally ensures successful control. Building on these analytical results, we propose a greedy heuristic for pinning control selection, which demonstrably outperforms existing methods.
\end{abstract}
\section{Introduction}
Steering the collective behavior of networks of coupled dynamical systems (also called multiagent systems) has been an open challenge in the control literature in the last decades \cite{d2023controlling}. The interest of the research community has been triggered by the numerous applications spanning wide areas of science and engineering. For instance, coordinating the motion of interacting systems is at the foundation of several engineering problems, including formation control \cite{cao2011formation}, rendez-vous problems \cite{canuto2008eulerian}, and power grid synchronization \cite{nishikawa2015comparative}. Additionally, control theory has provided insights into the mechanisms underlying opinion dynamics \cite{parsegov2016novel}, and into strategies for controlling and mitigating epidemic outbreaks \cite{della2020network,parino2021modelling}.

When the control action can only be exerted on a (possibly small) fraction of the network nodes, the classical pinning control scheme has been proposed \cite{wang2002pinning}. In pinning control, an external node, the \emph{pinner}, sets a solution of the individual dynamics as the reference trajectory for the entire network. The pinner then sends a proportional control action to a fraction of the network nodes, denoted \emph{pinned nodes}, of which it is assumed to be able to measure the state. Each feedback signal is modeled as a directed edge from the pinner to a pinned node. 

The problem then became that of identifying how many and which nodes to pin in order to control the entire network. It was observed that at least one node in each root strongly connected component of the graph should be pinned \cite{lu2010global}. Then, given a number of node measurements that can be taken, researchers have tried to optimally select where to inject the control signals. For instance, in the context of consensus, the maximization of the smallest eigenvalue of the grounded Laplacian has been used as a selection criterion for the pinned nodes \cite{pirani2015smallest,zhou2025maximizing}. Adaptive strategies have also been proposed to modulate the gains to foster pinning controllability \cite{di2019decentralized}.

Existing approaches focus on standard pinning control on digraphs. However, it has been recently pointed out that, in several applications, this control problem should be formulated on directed hypergraphs instead of digraphs, for a twofold reason. First, modeling the connections between the controlled nodes by directed edges, implicitly assumes pairwise interactions. In several real-world network systems, however, nonlinear interactions take place between groups (larger than 2) of agents, which cannot be factored as sum of pairwise interactions. This is the case, for instance, of chemical reaction networks, where reaction terms depend on the joint concentration of multiple reactants \cite{feinberg2019foundations}, or in social contagion and opinion formation, where adoption or behavioral change may require reinforcement from multiple peers at once \cite{iacopini2019simplicial,neuhauser2022consensus}.

The second reason for which directed multibody interactions naturally arise in pinning control is that assuming directed edges from the pinner to the controlled nodes would imply the ability of individually measuring the outputs of the pinned nodes. This is not the case, however, when the sensors can only measure an aggregated output of two or more nodes \cite{della2023emergence}. Such relevant case cannot be modeled by directed edges from the pinner to the pinned nodes, but rather as directed hyperedges from the pinner to a set of pinned nodes. This control framework has been recently proposed in \cite{de2022pinning,della2023emergence,liuzza2025synchronization} and references therein, but no method exists for the minimal selection of the pinning hyperedges, that is, a strategy that minimizes the number of measurements required for control. The only existing heuristic, proposed in \cite{de2022pinning}, can only be applied to the case of pinning edges, that is, when the pinner can individually measure the state of the pinned nodes.

In this paper, we aim to fill this gap and tackle the problem of optimally selecting the nodes to constitute the pinning hyperedges. In particular, we focus on network systems with a type II Master Stability Function, a definition that we extend to the case of higher-order interactions. For this wide class of systems (which include e.g. diffusively coupled smooth QUAD dynamical systems), we study the limit behavior of the eigenvalues of an extended Laplacian that describes the pinning controlled hypergraph. Based on this result, we are able to formulate an optimal control problem so to use a minimal number of measurements to control the network to the desired trajectory. Somewhat surprisingly, we find that there are instances in which pinning hyperedges are preferred to pinning edges, albeit the former are associated to aggregated, lower-resolution measurements.

Since the optimal control problem should be in principle solved by exhaustive search, unfeasible for large networks, we also propose an analytically grounded, yet efficient greedy heuristic. Through extensive numerical analysis on testbed hypergraph topologies, we show that a) the proposed heuristic closely matches the performance of an exhaustive search, b) when applied to the selection of pinning hyperedges, it strongly outperforms the heuristic proposed in \cite{de2022pinning}, as well as alternative, purely topological, strategies, and c) it also effectively works for the selection of 
pinning hyperedges, where it closely matches the optimal solution obtained from exhaustive search.
\section{Mathematical preliminaries}
\subsection{Matrix notation}
Given a positive integer $n$, $I_n$ denotes the identity matrix in $\R^{n\times n}$, with $\mathfrak e_i$ being its $i$-th column, and $0_n$ and $\mathbbm{1}_n$ are a vector of all zeros and of all ones in $\R^n$, respectively. Given a vector $v=[v_1,\ldots,v_n]^\T$, $\mathrm{diag}(v)\in\R^{n\times n}$ is the diagonal matrix with elements  $v_1,\ldots v_n$ on the diagonal. 
Given a matrix $M\in\R^{n\times n}$, $M^\T$ is its transpose, and $\sigma(M)$ is the set containing its eigenvalues. Further, we denote $\lambda_i(M)$ the $i$-th eigenvalue of $M$.
%sort the eigenvalues $\lambda_1(M),\ldots,\lambda_n(M)$ of matrix $M$ so that their real part is in ascending order, that is, $\Re(\lambda_1(M))\le\ldots\le\Re(\lambda_n(M))$.

Given two matrices $M_1\in\R^{a\times b}$ and $M_2\in\R^{c\times d}$, we denote $(M_1\otimes M_2)\in\R^{ac\times bd}$ their Kronecker product \cite{horn2012matrix}, and, when they have the same number of columns ($b=d$), $[M_1;M_2]\in\R^{(a+c)\times b}$ their vertical concatenation, whereas, when they have the same number of rows $(a=c)$, $[M_1,M_2]\in\R^{a\times(b+d)}$ their horizontal concatenation.

\subsection{Directed hypergraphs}\label{subsec:directed_hyper}
A directed hypergraph $\mathscr H$ is a pair $(\mathcal V,\mathcal E)$, where $\mathcal V=\{\nu_1,\ldots,\nu_N\}$ is the set of nodes, and $\mathcal{E}=\{\e_1,\ldots,\e_M\}$ is the set of directed hyperedges; the $i$-th directed hyperedge $\e_i$ of $\mathscr H$ is an ordered pair $(\tail(\e_i),\head(\e_i))$ of (possibly empty) disjoint subsets of the hypergraph nodes \cite{gallo1993directed}. Namely, the ordered subsets $\tail(\e_i)$ and $\head(\e_i)$ of $\mathcal{V}$, are the set of tails and heads of the hyperedge $\e_i$, respectively, and such that $\tail(\e_i)\cap \head(\e_i)=\emptyset$. The cardinality $|\e_i|$ of a hyperedge $\e_i$ is defined as the number $|\tail(\e_i)|+|\head(\e_i)|$ of nodes composing it. The functions $\mathfrak t(\e,i)$ and $\mathfrak h(\e,j)$ associate to the $i$-th tail and $j$-th head of a hyperedge $\e\in\mathcal E$ the corresponding labels in $\mathcal V$, respectively. Furthermore, given two node subsets $\mathcal V_1,\mathcal V_2\subseteq \mathcal V$, we denote $\mathcal E^{\mathcal V_1,\mathcal V_2}=\{\e \in\mathcal E:\mathcal V_1\subseteq \mathcal \tail(\e)\land \mathcal V_2\subseteq \head(\e)\}$; with a slight abuse of notation, when a subset is a singleton, we will refer to it by its only element, e.g. if $\mathcal V_1=\{\nu_j\}$ we write $\mathcal E^{j,\mathcal V_2}$ in place of $\mathcal E^{\{\nu_j\},\mathcal V_2}$. Finally,
we denote $\mathcal{E}^{\cdot,j}=\{\e_i\in\mathcal E : \nu_j\in \head(\e_i) \}$ as the subset of hyperedges having $\nu_j$ as a head, and $\mathcal{E}^{j,\cdot}=\{\e_i\in\mathcal E : \nu_j\in \tail(\e_i) \}$ as the subset of hyperedges having $\nu_j$ as a tail; we define the in-degree $d_j^{\mathrm{in}}$ and out-degree $d_j^{\mathrm{out}}$ of a node $\nu_j$ as the cardinality of the latter two sets, that is, $d_j^{\mathrm{in}}=|\mathcal{E}^{j,\cdot}|$ and $d_j^{\mathrm{out}}=|\mathcal{E}^{\cdot,j}|$. 
%Similarly, $\mathcal{E}^{i,j}=\{\e_k\in\mathcal E : \nu_i\in \tail(\e_k)\land \nu_j\in \head(\e_k) \}$ is the subset of hyperedges having $\nu_i$ as a tail and $\nu_j$ as a head. 
%\textcolor{blue}{Why are we not interested in defining $\mathcal{E}^{j,\cdot}$?}\red{It seems that we do not need it, since what we often need is the opposite, which is a sort of equivalent of the in-neighborhood for hypergraphs}
\subsection{Signed graphs \cite{ahmadizadeh2017eigenvalues}}
A weighted signed graph $\mathscr S$ is defined by the triple $\{\mathcal V,\mathcal E, \mathcal W\}$, where $\mathcal V$ is the set of nodes, $\mathcal E\subseteq \mathcal V\times \mathcal V$ is the set of edges, and the function $\mathcal W: \mathcal V\times \mathcal V \rightarrow \mathbb R$ associates $0$ to each pair $(i,j)\in\mathcal V\times \mathcal V$ that is not in $\mathcal E$, and a non-zero weight to each edge in $\mathcal E$. Different from standard weighted digraphs, also negative weights can be associated to edges. The adjacency matrix $A$ associated to $\mathscr S$ is such that its $ij$-th entry $a_{ij}$ is equal to the weight $\mathcal W(i,j)$ associated to edge $(i,j)$.
The Laplacian matrix for signed graphs has been defined as $L=D-A$, where $D=\mathrm{diag}\left(\left[d_1^{\mathrm{out}},\ldots,d_{|\mathcal V|}^{\mathrm{out}}\right]^\T\right)$, with $d_i^{\mathrm{out}}=\sum_{j=1}^{|\mathcal V|}a_{ij}$ being the out-degree of node $i$. By definition, $L$ is zero row-sum, which implies that $0\in\spec(L)$, and that $\mathbbm{1}_N$ is its associated (right) eigenvector.

% The superimposition of two signed graphs $\mathscr S_1=\{\mathcal  V,\mathcal E_1,\mathcal W_1\}$ and $\mathscr S_2=\{\mathcal  V,\mathcal E_2,\mathcal W_2\}$ is a graph $\mathscr S=\mathscr S_1 \oplus\mathscr S_2=\{\mathcal V, \mathcal E, \mathcal W\}$ where $\mathcal E=\mathcal E_1\cup\mathcal E_2$, and $\mathcal W(i,j)=\mathcal W_1(i,j)+\mathcal W_2(i,j)$. 

\section{Pinning controllability of network systems on hypergraphs}
\subsection{Network model and control question}\label{subsec:model_question}
We consider an ensemble of $N$ nonlinear dynamical systems coupled through a directed hypergraph $\mathscr H_c=\{\mathcal V_c,\mathcal E_c\}$, where the sets $\mathcal V_c=\{\nu_1,\ldots,\nu_N\}$ and $\mathcal{E}_c=\{\e_1,\ldots,\e_N\}$ are the set of nodes and hyperedges, respectively. Given a node $\nu_i\in\mathcal V_c$, its state will be a vector $x_i \in \mathbb{R}^n$, whereas, denoting $x_{\mathfrak{t}(\e,i)}$ and $x_{\mathfrak h(\e,j)}$ the state of the $i$-th tail and of the $j$-th head of a hyperedge $\e\in\mathcal E_c$,
we associate to $\e$ a tail state matrix $x_\e^\tau=[x_{\mathfrak{t}(\e,1)},\ldots,x_{\mathfrak{t}(\e,|\tail(\e)|)}]$ and a head state matrix $x_\e^h=[x_{\mathfrak{h}(\e,1)},\ldots,x_{\mathfrak{h}(\e,|\head(\e)|)}]$. In the absence of a control action, the network dynamics would read
\begin{equation}\label{eq:uncontrolled_network}
    \dot{x}_i= f(x_i) + \sum_{\e\in\mathcal E_c^{\cdot,i}}\sigma_{\e}g(x_{\e}^{\tau}\alpha_\e- x_{\e}^h\beta_\e),
\end{equation}
where $\sigma_{\e}$ is the coupling gain associated to the hyperedge $\e$; $f:\mathbb R^n\times \mathbb R_{\ge 0}\rightarrow \mathbb R^n$ is the vector field describing the individual dynamics, and $g:\mathbb R^n\rightarrow \mathbb R^n$ is the (possibly nonlinear) coupling protocol;
$\alpha_\e=[(\alpha_\e)_{\mathfrak t(\e,1)},\ldots,(\alpha_\e)_{\mathfrak t(\e,|\tail(\e)|)}]^\T$ and $\beta_\e=[(\beta_\e)_{\mathfrak h(\e,1)},\ldots, (\beta_\e)_{\mathfrak h(\e,|\head(\e)|)}]^\T$ are the (ordered) vectors stacking the non-negative weights associated to the tails and heads of $\e$, respectively, defined such that $\alpha_\e^\T \mathbbm{1}_{|\tail(\e)|}=\beta_\e^\T \mathbbm{1}_{|\head(\e)|}=1$. This coupling protocol is known as hyperdiffusive, as it reduces to the standard diffusive protocol when only pairwise interactions are present \cite{della2023emergence}.

\begin{defi}\label{rem:homo}
The hypergraph weights are \textit{homogeneous} when $\alpha_\e=\mathbbm{1}_{|\tail(\e)|}/|\tail(\e)|$ and $\beta_\e=\mathbbm{1}_{|\head(\e)|}/|\head(\e)|$ \cite{de2022pinning}.  
\end{defi}

In the presence of a pinner, an additional node $p$ is added to the network, with state $x_p\in\mathbb{R}^n$, sharing the same individual dynamics as the rest of the network, and setting the reference trajectory as the solution of the following Cauchy problem
\begin{equation}
\dot x_p = f(x_p,t),\qquad x_p(0)=x_{p0}.
\end{equation}
The pinner node is unidirectionally coupled to a subset of the network nodes, $\mathcal P$, through a set of directed hyperedges (the \textit{pinning hyperedges} $\mathcal E_{\mathrm{pin}}=\{\varepsilon_1^p,\ldots,\varepsilon_m^p\}$), characterized by the fact that the pinner is the sole tail. The heads sets of the pinning hyperedges are obtained from a partition of the set $\mathcal P$, this meaning that the same node cannot be pinned by two different hyperedges, and that $|\mathcal H(\varepsilon_1^p)|+\ldots+|\mathcal H(\varepsilon_m^p)|=|\mathcal P|$.

When the pinner is present, the network dynamics \eqref{eq:uncontrolled_network} then become
\begin{equation}\label{eq:controlled_network}
\begin{aligned}
    \dot{x}_i&= f(x_i) + \sum_{\e\in\mathcal E_c^{\cdot,i}}\sigma_{\e}g(x_{\e}^{\tau}\alpha_\e- x_{\e}^h\beta_\e)\\
    &+\sum_{\varepsilon\in\mathcal E_{\mathrm{pin}}}\kappa g(x_p-x_{\varepsilon}^h\beta_{\varepsilon}),
\end{aligned}
\end{equation}
where $\kappa>0$ is the control gain.

Since the hyperdiffusive coupling protocol is synchronization noninvasive, the synchronization manifold $x_i(t)=x_p(t)$ for all $i\in\mathcal V$ is invariant. The control goal is to drive the pinning error $e(t)=[e_1;\ldots;e_N]$, with $e_i=x_i-x_p$, to zero. More formally,
\begin{defi}
    The controlled network \eqref{eq:controlled_network} is locally asymptotically controlled to the pinner's trajectory if
    \begin{enumerate}
        \item for each $\varsigma>0$ and $t_0\ge 0$, there exists $\Delta_1(\varsigma,t_0)$ such that
        \begin{equation}
            \|e(t)\|<\varsigma,\quad \forall t\ge t_0,
        \end{equation}
        for $\|e(t_0)\|\le \Delta_1(\varsigma,t_0)$; and
        \item for each $t_0\ge 0$, there exists a $\Delta_2(t_0)$ such that
        \begin{equation}
            \lim_{t\rightarrow +\infty}e(t)=0.
        \end{equation}
        for $\|e(t_0)\|\le \Delta_2(t_0)$.
    \end{enumerate}
\end{defi}

\textit{Control question:} is network  \eqref{eq:controlled_network} pinning controllable, in the sense that there exists a control gain $\kappa$ for which \eqref{eq:controlled_network} is locally asymptotically controlled?
%, with $s$ being any solution of the decoupled dynamics $\dot s=f(s,t)$, it makes sense to linearize \eqref{eq:controlled_network} the dynamics along $s$. 
\subsection{Convergence analysis}
Notice that the argument of the nonlinear function $g$ in the first summation of \eqref{eq:controlled_network} can be rewritten as
\begin{equation}\label{eq:equiv}
\sum_{j\in\mathcal T(\e)}(\tilde\alpha_{\e})_j(x_j-x_i)-\sum_{j\in\mathcal H(\e)}(\tilde\beta_{\e})_j(x_j-x_i),
\end{equation}
where $\tilde \beta_\e$ ($\tilde \alpha_\e$) belongs to $\R^N$ and its $j$th element is 0 if node $j$ is not a tail (head) of $\e$, whereas, if $j$ is a tail (head) of $\e$, it is equal to the weight associated to that tail (head).
Moreover, we assume without loss of generality that the first $|\mathcal P|$ nodes receive a control input, sorted so that the nodes in $\varepsilon_i^p$ precede those in $\varepsilon_{i+1}^p$, for all $i=1,\ldots,m-1$. It is then possible to define the following pinning matrix
% \begin{equation}\label{eq:pinning_matrix}
%     P=\begin{bmatrix}
%         P_{\varepsilon_1^p} & \cdots & \cdots & & & \\
%         \vdots & \ddots & \vdots & & 0_{|\mathcal P|\times(N-|\mathcal P|)} &\\
%         \cdots & \cdots & P_{\varepsilon_m^p} & & &\\
%         & & & & & \\
%         & 0_{(N-|\mathcal P|)\times |\mathcal P|} & & & 0_{(N-|\mathcal P|)\times (N-|\mathcal P|)} &\\
%         & & & & &
%     \end{bmatrix},
% \end{equation}
\begin{equation}\label{eq:pinning_matrix}
P=\begin{bmatrix}
    P_{\varepsilon} & 0_{|\mathcal P|\times(N-|\mathcal P|)}\\
    0_{(N-|\mathcal P|)\times |\mathcal P|} & 0_{(N-|\mathcal P|)\times (N-|\mathcal P|)}
\end{bmatrix},
\end{equation}
where $ P_\varepsilon\in\mathbb R^{|\mathcal P|\times|\mathcal P|}$ is the block diagonal matrix
\begin{equation*}
    P_\varepsilon=\begin{bmatrix}
        P_{\varepsilon_1^p} & &\\
        & \ddots & \\
        & & P_{\varepsilon_m^p}
    \end{bmatrix}\!,\text{ with }
%\end{equation}
%\begin{equation}
    P_{\varepsilon_i^p}=\begin{bmatrix}
        \beta_{\varepsilon_i^p}^{(1)} & \cdots & \beta_{\varepsilon_i^p}^{(|\varepsilon_i^p|)}\\
        \vdots &  & \vdots\\
        \beta_{\varepsilon_i^p}^{(1)} & \cdots & \beta_{\varepsilon_i^p}^{(|\varepsilon_i^p|)}
    \end{bmatrix}\!. 
\end{equation*}
We can now focus on the dynamics of the $i$-th pinning error $e_i=x_i - x_p$ and linearize its dynamics around the reference trajectory $x_p$, thus obtaining
\begin{equation}\label{eq:deltax_dyn}
\dot {e}_i =
 \mathrm{JF}(x_p) e_i-\sum_{j=1}^N (L_{ij}+\kappa P_{ij}) \mathrm{JG}(0)  e_{j},
 \end{equation}
where $\mathrm{JF}\in\R^{n\times n}$ and $\mathrm{JG}\in\R^{n\times n}$ are the Jacobian matrices associated to $f$ and $g$, respectively, and $L_{ij}$ is the entry $ij$ of the Laplacian matrix of the signed graph $\mathscr S(\mathscr H_c)$ associated to $\mathscr H_c$, defined as
\begin{equation}\label{eq:lapl_signed}
L_{ij}=\sum_{\e\in\mathcal E^{\cdot,\{i,j\}}}(\tilde\beta_\e)_j \sigma_\e-\sum_{\e\in\mathcal E^{j,i}}(\tilde\alpha_\e)_j \sigma_\e,
\end{equation}
with $\mathcal E^{j,i}$ being the set of hyperedges having $j$ as a tail and $i$ as a head, $\mathcal E^{\cdot,\{i,j\}}$ is set of hyperedges having both $i$ and $j$ as heads.

We can then rewrite \eqref{eq:deltax_dyn} in matrix form as 
\begin{equation}\label{eq:contr_mat_form}
\dot{e} = \big(I_N\otimes \mathrm{JF}(x_p)\big) e - M\otimes \mathrm{JG}(0)e,    
\end{equation}
where $M(\kappa)=L+\kappa P$.

Next, we introduce the matrix $V$ such that $V^{-1}M V$ is in Jordan form, and then introduce the transformation $\eta=(V\otimes I_n)e$. In general, matrix $M$ has $\nu\le n$ Jordan blocks, and therefore $\eta$ can be decomposed as $[\eta_1,\ldots,\eta_\nu]$, where $\eta_i=[\eta_{i1},\ldots,\eta_{i c_i}]\in\mathbb C^{nc_i}$ are the components of $\eta$ associated to the $i$-th block, with $\eta_{ij}\in\mathbb C^n$ and $c_i$ being the size of the $i$-th block. The dynamics of $\eta_i$ are then given by
\begin{align} \label{eq_etas}
\dot{\eta}_{i1} & =\left(\mathrm{JF}(x_p)-\lambda_i \mathrm{JG}(0)\right) \eta_{i1},\nonumber \\
\dot{\eta}_{i2} & =\left(\mathrm{JF}(x_p)-\lambda_i \mathrm{JG}(0)\right) \eta_{i2}-\mathrm{JH}(0) \eta_{i1}, \nonumber\\
\vdots & \\
\dot{\eta}_{i c_i} & =\left(\mathrm{JF}(x_p)-\lambda_i \mathrm{JH}(0)\right) \eta_{i c_i}-\mathrm{JG}(0) \eta_{i (c_i-1)},\nonumber
\end{align}
where $\lambda_i$ is the $i$-th eigenvalue of $M$.

We can now introduce the following master equation:
\begin{equation}\label{eq:msf}
    \dot \xi = \big(\mathrm{JF}(x_p) - \mu \mathrm{JG}(0)\big)\xi,
\end{equation}
where $\xi\in\mathbb R^n$ and $\mu\in\mathbb C^n$. Following the notation used e.g. in \cite{pecora1998master} for networks on digraphs and in \cite{della2023emergence} for networks on directed hypergraphs, we call the maximum Lyapunov exponent $\Lambda(\mu)$ associated with \eqref{eq:msf} the master stability function for the pinned network \eqref{eq:controlled_network}.

We can now give the following proposition:
\begin{propo}\label{prop:1}
    if $\Lambda_{\max}:=\max_{\lambda\in\sigma(M(\kappa))} \Lambda(\lambda)<0$, then the controlled network \eqref{eq:controlled_network} is locally asymptotically controlled to the pinner's trajectory.
\end{propo}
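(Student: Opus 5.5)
The plan is the standard master stability function argument \cite{pecora1998master,della2023emergence}, transported to the pinning error coordinates. First, the linearization \eqref{eq:contr_mat_form} has to be justified as the variational equation of \eqref{eq:controlled_network} along the invariant manifold $e=0$. The hyperdiffusive protocol vanishes on the synchronization manifold, because $\alpha_\e^\T\mathbbm{1}=\beta_\e^\T\mathbbm{1}=1$, and $x_p-x_\varepsilon^h\beta_\varepsilon=0$ when every head equals $x_p$. Expanding $f$ around $x_p$ and $g$ around $0$, using \eqref{eq:equiv}, gives \eqref{eq:deltax_dyn} plus remainder terms that are $o(\|e\|)$, provided $f$ and $g$ are $C^1$ with remainders uniform along the reference trajectory.

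Second, I apply the change of coordinates $\eta=(V\otimes I_n)e$, taking $V$ here to be the inverse of the Jordan basis matrix. This block-decouples the linear part into the cascades \eqref{eq_etas}, where every $\eta_{ij}$ obeys $\dot\eta_{ij}=(\mathrm{JF}(x_p)-\lambda_i\mathrm{JG}(0))\eta_{ij}-\mathrm{JG}(0)\eta_{i(j-1)}$. The last term is absent for $j=1$; I read the $\mathrm{JH}$ appearing in \eqref{eq_etas} as $\mathrm{JG}$.

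Third, I treat each Jordan chain inductively. The head $\eta_{i1}$ solves the master equation \eqref{eq:msf} with $\mu=\lambda_i$. Since $\Lambda(\lambda_i)\le\Lambda_{\max}<0$, it decays exponentially, with a rate that can be taken arbitrarily close to $\Lambda(\lambda_i)$. Each subsequent $\eta_{ij}$ is the same stable linear system driven by an exponentially decaying input, so by variation of constants it decays exponentially too, up to polynomial factors in $t$ that are absorbed by shrinking the rate. Since $V$ is invertible, $e$ decays exponentially for the linearized system.

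The main obstacle is turning negative Lyapunov exponents into genuine uniform exponential stability of the time-varying linear system $\dot\xi=(\mathrm{JF}(x_p(t))-\mu\mathrm{JG}(0))\xi$, which is needed for the nonlinear remainder to be dominated. A negative maximal Lyapunov exponent alone does not imply uniform exponential stability, because Lyapunov irregularity can occur. I would handle this as is customary in the MSF literature: assume the reference trajectory is bounded (or periodic, or an equilibrium) and the variational system is Lyapunov regular. Under that assumption, $\Lambda_{\max}<0$ yields a state transition bound $\|\Phi(t,t_0)\|\le K(t_0)e^{-\gamma(t-t_0)}$ with $\gamma>0$. A Gronwall argument on the full nonlinear error dynamics, with the $o(\|e\|)$ remainder made smaller than $\gamma/(2K)$ in a small enough ball, then gives both conditions of the definition of local asymptotic control. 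Here $\Delta_1$ and $\Delta_2$ depend on $t_0$ through $K(t_0)$, which is exactly the non-uniform notion stated in the definition.
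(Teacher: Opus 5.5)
Your route is the paper's. Its proof is the single sentence that $\Lambda_{\max}<0$ makes every Jordan block asymptotically stable, whence the thesis, and your first three paragraphs spell that out, correctly fixing the $V$ versus $V^{-1}$ convention and the $\mathrm{JH}$ typo. You also rightly single out what the paper skips, namely the passage from negative Lyapunov exponents to stability. However, your patch does not close that gap.

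The bound $\|\Phi(t,t_0)\|\le K(t_0)e^{-\gamma(t-t_0)}$ for a fixed initial time does not use regularity at all. For any $0<\gamma<-\Lambda(\mu)$ it follows from the definition of the Lyapunov exponent. It also holds in Perron's classical example, where the linearization has negative exponents yet the quadratically perturbed system is unstable, so an argument that uses only this bound cannot be valid. Variation of constants needs a two-time estimate on $\Phi(t,s)$ for every $s\in[t_0,t]$. You need it already in your cascade step, to bound $\int_{t_0}^t\Phi(t,s)\,\mathrm{JG}(0)\,\eta_{i1}(s)\,ds$, where $\mathrm{JG}(0)$ need not commute with $\mathrm{JF}(x_p)$. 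You need it again for $\int_{t_0}^t\Phi(t,s)\,r(s,e(s))\,ds$. Requiring the remainder to be below $\gamma/(2K)$ presupposes $\|\Phi(t,s)\|\le Ke^{-\gamma(t-s)}$ with $K$ independent of $s$, i.e.\ uniform exponential stability. Lyapunov regularity does not provide that.

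What regularity does provide is the tempered estimate $\|\Phi(t,s)\|\le D_\epsilon e^{-\gamma(t-s)+\epsilon s}$ for every $\epsilon>0$. This suffices for the Jordan cascade, at the cost of $\epsilon$ in the rate. In the nonlinear step, though, the admissible size of the remainder, of order $\gamma e^{-\epsilon s}/D_\epsilon$, tends to zero along the solution, so no fixed ball on which an $o(\|e\|)$ remainder is small will do.

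There are two standard repairs:
\begin{itemize}
\item Use Lyapunov's theorem on stability by first approximation for regular systems. Assume $\|r(t,e)\|\le C\|e\|^{1+q}$ for some $q>0$; this holds when $f$ and $g$ are $C^2$ with bounded second derivatives near a bounded reference trajectory, as in the Lorenz/$\arctan$ example. Then choose $\epsilon$ small relative to $q\gamma$. The basin radius depends on $t_0$, which the paper's non-uniform definition allows.
\item If $x_p$ is an equilibrium or a periodic orbit, $\Lambda_{\max}<0$ yields uniform exponential stability via eigenvalues or Floquet multipliers. A single $K$ then exists, and your Gronwall argument goes through as written.
\end{itemize}
Either way, an extra hypothesis of this kind is needed, and the paper's one-line proof uses it tacitly.
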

\begin{proof}
    As $\Lambda_{\max}<0$, all the Jordan blocks are asymptotically stable, and the thesis follows.
\end{proof}

To answer the control question stated at the end of Section \ref{subsec:model_question} using Proposition \ref{prop:1}, one would need to check its hypothesis for all possible values of $\kappa$. To avoid such a daunting task, we can check the assumption in the limit of an infinitely large control gain. From the continuity of the eigenvalues of $M(\kappa)$, it would then follow the existence of a finite $\kappa$ guaranteeing network pinning controllability, as stated in the following:

\begin{coro}\label{prop:2}
    if $\lim_{\kappa\rightarrow+\infty}\max_{\lambda\in\sigma(M(\kappa))}\Lambda(\lambda)<0$, then network \eqref{eq:controlled_network} is pinning controllable.
\end{coro}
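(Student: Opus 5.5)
The plan is to reduce the corollary to Proposition~\ref{prop:1} by producing a finite gain $\bar\kappa$ at which its hypothesis holds. Write $\Lambda^\star:=\lim_{\kappa\to+\infty}\max_{\lambda\in\sigma(M(\kappa))}\Lambda(\lambda)<0$. By the definition of the limit, for $\epsilon:=-\Lambda^\star/2>0$ there exists $\bar\kappa>0$ such that
\begin{equation*}
\max_{\lambda\in\sigma(M(\kappa))}\Lambda(\lambda)<\Lambda^\star+\epsilon=\Lambda^\star/2<0\qquad\text{for all }\kappa\ge\bar\kappa .
\end{equation*}
Fix any such $\kappa$, for instance $\kappa=\bar\kappa$. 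Then $\Lambda_{\max}<0$ for $M(\kappa)$, and Proposition~\ref{prop:1} shows that \eqref{eq:controlled_network} is locally asymptotically controlled to the pinner's trajectory. This is exactly the definition of pinning controllability in the control question, so the corollary follows.

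The only subtlety is what the limit means, since the spectrum of $M(\kappa)=L+\kappa P$ changes with $\kappa$ and some eigenvalues diverge. Because $P$ has rank $m$, $m$ eigenvalues grow like $\kappa$ times the nonzero eigenvalues of $P_\varepsilon$. The remaining $N-m$ eigenvalues converge to finite limits, by continuity of eigenvalues as roots of the characteristic polynomial of $L+\kappa P$, or equivalently of $\kappa^{-1}L+P$ after rescaling. So the statement is read as saying that $\max_\lambda\Lambda(\lambda)$ is eventually below a negative constant. This is the natural reading, and it is all the argument above uses.

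If one wanted the statement in terms of the limiting eigenvalue set $\{\lambda_i^\infty\}$ together with the divergent branches, extra work would be needed. One would have to show that $\Lambda$ is continuous, so that $\Lambda(\lambda_i(\kappa))\to\Lambda(\lambda_i^\infty)$ on the finite branches, and one would have to control the diverging branches through the assumed limiting behaviour of $\Lambda$ at large $|\mu|$. I expect this interchange of limit and maximum to be the only real obstacle. With the literal hypothesis as stated, the limit of the maximum is already given, so the $\varepsilon$-argument above bypasses the obstacle entirely.
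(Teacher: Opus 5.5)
Your argument is correct and is essentially the paper's: both fix a sufficiently large finite gain and apply Proposition~\ref{prop:1}. The paper justifies the existence of that gain by continuity of the eigenvalues of $M(\kappa)$ in $\kappa$, whereas your $\epsilon$-argument from the definition of the limit is cleaner under the literal hypothesis. Two small points, neither load-bearing: if $\Lambda^\star=-\infty$ (e.g.\ consensus with every node pinned individually, so $m=N$), replace $\Lambda^\star/2$ by any negative constant; and rescaling to $\kappa^{-1}L+P$ only shows that the non-divergent eigenvalues are $o(\kappa)$, since their convergence is the content of Theorem~\ref{thm:asymptotic_spectrum}.
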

% \begin{proof}
%     From Proposition \ref{prop:1} and as the eigenvalues vary with continuity with respect to $\kappa$, the thesis follows.
% \end{proof}
Corollary \ref{prop:2} could then be used to determine whether a given pinning configuration, described by the set $\mathcal E_{\mathrm{pin}}$, can guarantee local asymptotic control to the desired trajectory. However, in order to use this result, we need to compute the spectrum of $M$ as the control gain $\kappa$ goes to infinity, which is the focus of the next subsection. 
\subsection{Limit spectrum of $M=L+\kappa P$}
Let us study the eigenvalues of $M(\kappa)$ for $\kappa\rightarrow \infty$. From \eqref{eq:pinning_matrix}, and reminding that $\beta_\epsilon^\T \mathbbm{1}_{|\mathcal H(\epsilon)|}=1$ for all $\epsilon \in\mathcal E$, there exists a transformation $T$ that diagonalizes $P$, such that we can read on its diagonal  $m$ unitary eigenvalues (each corresponding to a pinning hyperedge) followed by its $N-m$ zero eigenvalues. We can then apply this transformation to matrix $M$, and obtain $\overline M=T^{-1} M T$, which has the same spectrum as $M$. Matrix
$\overline M$ can be written as
\[
\overline M(\kappa)=\overline L + \kappa \overline P,
\]
where $\overline P=\mathrm{diag}([\mathbbm{1}_m;0_{N-m}])$, and $\overline L=T^{-1} L T$ can be written as
\[
\overline L=\begin{bmatrix}
    \overline L_{11} & \overline L_{12}\\
    \overline L_{21}& \overline L_{22}
\end{bmatrix}.
\]
where $L_{11}$ is a square $m$-dimensional block and the dimensions of the remaining blocks follow. We are now ready to state the following result
\begin{theorem}\label{thm:asymptotic_spectrum}
    Denoting $\lambda_i(M(k))$, $i=1,\ldots,N$, the eigenvalues of $M(\kappa)$, we have that\footnote{In \eqref{eq:lemma_nm}, the limit of a set of functions is intended as the set of the limits of each element of the set.}
    \begin{subequations}
    \begin{align}
        &\lim_{\kappa\rightarrow+\infty}\lambda_i(M(\kappa))=+\infty, \quad i=1,\ldots,m,\label{eq:lemma_m}\\
        &\lim_{k\rightarrow+\infty}\{\lambda_{m+1}(M(\kappa)),\ldots,\lambda_N(M(\kappa))\}=\sigma(\overline L_{22}).\label{eq:lemma_nm}
    \end{align}
    \end{subequations}
% More precisely,
% \[
% \sigma(\overline L+\kappa \overline P)
% =
% \{\kappa+\lambda_i(\overline L_{11})\}_{i=1}^{m}
% \;\cup\;
% \{\lambda_i(\overline L_{22})\}_{i=1}^{N-m}.
% \]
\end{theorem}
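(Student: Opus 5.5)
The plan is to work with $\overline M(\kappa)=\overline L+\kappa\overline P$, which is similar to $M(\kappa)$ and so has the same spectrum, and to treat it as a singular perturbation problem. Set $\epsilon=1/\kappa$ and $\epsilon\overline M(\kappa)=\overline P+\epsilon\overline L$. This is a continuous (indeed affine) matrix function of $\epsilon$. At $\epsilon=0$ its spectrum is $m$ copies of $1$ and $N-m$ copies of $0$. By continuity of eigenvalues (e.g. Ostrowski's bound or Rouché's theorem applied to the characteristic polynomial), for small $\epsilon$ exactly $m$ eigenvalues of $\epsilon\overline M$ lie near $1$ and $N-m$ lie near $0$. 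The first group, once divided by $\epsilon$, gives eigenvalues $\lambda_i(M(\kappa))=\kappa(1+o(1))$, $i=1,\dots,m$. Their real parts therefore diverge to $+\infty$, which is \eqref{eq:lemma_m}; the statement's $+\infty$ is understood as divergence of the real parts, or of the moduli.

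For the second group I will use a Schur complement. Take $\lambda$ bounded, say $|\lambda|\le R$, and $\kappa$ large enough that $\kappa+\overline L_{11}-\lambda I_m$ is invertible; this holds once $\kappa> R+\|\overline L_{11}\|$. Then
\begin{equation*}
\det(\lambda I_N-\overline M(\kappa))=\det(\lambda I_m-\overline L_{11}-\kappa I_m)\,\det\!\big(\lambda I_{N-m}-\overline L_{22}-\overline L_{21}(\lambda I_m-\overline L_{11}-\kappa I_m)^{-1}\overline L_{12}\big).
\end{equation*}
The first factor has no zeros in the disk $|\lambda|\le R$. In the second factor the correction term is $O(1/\kappa)$, uniformly for $|\lambda|\le R$. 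So the second factor is a polynomial-like analytic function of $\lambda$ that converges uniformly on compact sets to $\det(\lambda I_{N-m}-\overline L_{22})$.

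Now choose $R$ larger than the spectral radius of $\overline L_{22}$ and apply Hurwitz's theorem, or equivalently Rouché on small circles around each distinct eigenvalue of $\overline L_{22}$. This shows that, for large $\kappa$, the zeros of $\det(\lambda I-\overline M(\kappa))$ in the disk converge to $\sigma(\overline L_{22})$ with multiplicities. Multiplicities must be counted consistently, since the footnote reads the limit as a multiset.

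Counting then closes the argument: the disk captures exactly $N-m$ eigenvalues, while the remaining $m$ have modulus growing like $\kappa$ by the first step. Hence the two groups separate and $\{\lambda_{m+1},\dots,\lambda_N\}\to\sigma(\overline L_{22})$, giving \eqref{eq:lemma_nm}.

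The main obstacle is this bookkeeping. One has to make sure that the $N-m$ bounded eigenvalues are indeed the ones near $\sigma(\overline L_{22})$ with the correct multiplicities, and that no eigenvalue escapes to infinity. This is especially delicate when $\overline L_{22}$ has Jordan blocks, where the convergence is only at Puiseux rates. Rouché or Hurwitz handles this cleanly, because it needs only uniform convergence of the determinant and not diagonalizability. I would therefore rely on it rather than on eigenvector perturbation.
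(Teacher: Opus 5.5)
Your proof is correct. It shares the Schur-complement factorization with the paper, but it treats both groups of eigenvalues differently.

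The paper reads the divergent eigenvalues off the zeros $\kappa+\lambda_i(\overline L_{11})$ of $\Xi_1$. It obtains the bounded ones by evaluating $\Xi_2$ at $\lambda=\lambda_i(\overline L_{22})$ and letting $\kappa\to+\infty$ pointwise. You instead get the divergent group from the rescaled matrix $\overline P+\kappa^{-1}\overline L$ and continuity of eigenvalues. You then use the factorization only on a fixed disk $|\lambda|\le R$, where $(\kappa-\lambda)I_m+\overline L_{11}$ is uniformly invertible, so uniform convergence plus Hurwitz/Rouch\'e gives convergence with multiplicities and exactly $N-m$ bounded eigenvalues.

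This is more than a cosmetic difference. The paper's factorization is invalid precisely at the zeros of $\Xi_1$, where $\Xi_2$ has poles that generically cancel them. For example, take two nodes joined by an undirected edge, $L=\begin{bmatrix}1&-1\\-1&1\end{bmatrix}$, with node 1 pinned: the eigenvalues of $M(\kappa)$ are $\big(2+\kappa\pm\sqrt{\kappa^2+4}\big)/2$, and $\kappa+1$ is not one of them. Also, a zero of the pointwise limit of $\Xi_2$ does not by itself control multiplicities or rule out additional bounded eigenvalues. Your counting argument settles both points, including when $\overline L_{22}$ is not diagonalizable.

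What you give up, relative to what the paper is aiming at, is the finer asymptotics $\kappa+\lambda_i(\overline L_{11})+o(1)$ for the first group; you only get $\kappa(1+o(1))$. If you want the finer result, run the same Hurwitz argument on the complementary factorization $\det(\lambda I_{N-m}-\overline L_{22})\det\big((\lambda-\kappa)I_m-\overline L_{11}-\overline L_{12}(\lambda I_{N-m}-\overline L_{22})^{-1}\overline L_{21}\big)$, in the shifted variable $\lambda-\kappa$ over a fixed disk.
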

\begin{proof}
The eigenvalues of $\overline M(\kappa)$ solve the equation
\[
\det(\overline M(\kappa)-\lambda I_N)=0,
\]
where 
\[
\overline M(\kappa)=
\begin{pmatrix}
\overline L_{11} +\kappa I_m & \overline L_{12}\\
\overline L_{21} & \overline L_{22}
\end{pmatrix}.
\]
Since we can always find a $\bar \kappa$ such that, for all $\kappa> \bar \kappa$ the block $\overline L_{11} +\kappa I_m$ is invertible, for such values of $\kappa$ we can define 
\[
T(\kappa)=\begin{pmatrix}
I_m & -(\overline L_{11} +(\kappa-\lambda) I_m)^{-1}\overline L_{12}\\
0 & I_{N-m}
\end{pmatrix}. 
\]
Then, we exploit the Schuur complement \cite{schur1917potenzreihen} by considering the following product
\begin{equation}\label{eq:mktk}
\begin{aligned}
&(\overline M(\kappa)-\lambda I_N)T(\kappa)=\\
&\hspace{-4mm}\begin{pmatrix}
\overline M_{11} -\lambda I_m & 0\\
\overline L_{21} & \overline L_{22}-\overline L_{21}(\overline M_{11} -\lambda I_m)^{-1}\overline L_{12}-\lambda I_{N-m} 
\end{pmatrix}
\end{aligned}
\end{equation}
Since
\[
\det((\overline M(\kappa)-\lambda I_N)T(\kappa))=\det((\overline M(\kappa)-\lambda I_N))\det(T(\kappa)),
\]
and observing that $\det(T(\kappa))=1$, 
we obtain that
\begin{equation}\label{eq:factorization}
\det\big((\overline M(\kappa)-\lambda I_N)\big)=\Xi_1 \Xi_2,
\end{equation}
where 
\begin{subequations}
\begin{align}
    \Xi_1&=\det\big((\overline L_{11}+\kappa I_m-\lambda I_m)\big),\\
    \Xi_2&=\det\big(\overline L_{22}-\lambda I_{N-m}-\overline L_{21}(\overline L_{11} +\kappa I_m-\lambda I_m)^{-1}\overline L_{12}\big),\label{eq:xi2}
\end{align}
\end{subequations}
and we used the triangular form of $(\overline M(\kappa)-\lambda I_N)T(\kappa)$, implied trivially from \eqref{eq:mktk}. The eigenvalues of $\overline M(k)$, and then of $M(k)$, are therefore given by the union of the zeros of $\Xi_1$ and $\Xi_2$.

Note that the zeros of $\Xi_1$ are
\[
\kappa+\lambda_i(\overline L_{11}),\quad i=1,\ldots,m,
\]
which implies \eqref{eq:lemma_m}.

Next, take $\lambda=\lambda_i(\overline L_{22})$, for any $i=1,\ldots,N-m$. We then have
\[
\lim_{\kappa\to+\infty}(\overline L_{11} +\kappa I_m-\lambda I_m)^{-1}=0.
\]
From \eqref{eq:xi2}, it follows that, in the limit for $k\rightarrow+\infty$, $\lambda_i(\overline L_{22})$ is a zero of $\Xi_2$, for any $i=1,\ldots,N-m$. This implies \eqref{eq:lemma_nm}, and concludes the proof.
\end{proof}

Now that we have characterized the limit behavior of the spectrum of $M(\kappa)$, we can identify a class of networks whose network pinning controllability can be guaranteed. In particular, we extend the definition of type II master stability function, originally proposed in \cite{boccaletti2006complex}, to the case of network systems on directed hypergraphs
\begin{defi}\label{def:type2}
    We say that the controlled network \eqref{eq:controlled_network} is type II if there exists a real number $\bar \mu$ such that
    \begin{equation}
        \Lambda(\mu)<0, \quad \forall \mu> \bar\mu
    \end{equation}
\end{defi}

Using Theorem \ref{thm:asymptotic_spectrum}, we can now state the following corollary:
\begin{coro}\label{cor:1}
    if network \eqref{eq:controlled_network} is type II and $\Lambda(\lambda_i(\overline L_{22}))\allowbreak<0$ for all $i=1,\ldots,N-m$, then network \eqref{eq:controlled_network} is also pinning controllable.
\end{coro}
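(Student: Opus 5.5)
The plan is to combine Theorem \ref{thm:asymptotic_spectrum} with Corollary \ref{prop:2}, or more directly with Proposition \ref{prop:1} at a sufficiently large finite gain. We split the spectrum of $M(\kappa)$, for large $\kappa$, into two groups as in the theorem. The first group is the $m$ eigenvalues $\kappa+\lambda_i(\overline L_{11})$, the zeros of $\Xi_1$. The second group is the $N-m$ eigenvalues converging to $\sigma(\overline L_{22})$. We then show that, for some finite $\kappa$, every eigenvalue in either group lies in the region where the master stability function $\Lambda$ is negative.

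For the first group, the type II property (Definition \ref{def:type2}) provides $\bar\mu$ with $\Lambda(\mu)<0$ for all $\mu>\bar\mu$. These eigenvalues equal $\kappa+\lambda_i(\overline L_{11})$, so their real parts grow without bound and their imaginary parts stay fixed. Definition \ref{def:type2} is stated for real $\mu$, so some care is needed here. I would read the type II condition as holding on the region of the complex plane with $\Re(\mu)>\bar\mu$. This reading is natural, since the pinned eigenvalues leave along horizontal lines. The alternative is to note that the relevant $\lambda_i(\overline L_{11})$ are real, for instance under symmetry assumptions. With the first reading, there is a $\kappa_1$ such that $\Lambda(\kappa+\lambda_i(\overline L_{11}))<0$ for all $\kappa>\kappa_1$ and $i=1,\ldots,m$.

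For the second group, the hypothesis gives $\Lambda(\lambda_i(\overline L_{22}))<0$ for each $i$. I would invoke continuity of $\Lambda$ in $\mu$. This holds because $\Lambda(\mu)$ is the largest real part of an eigenvalue of $\mathrm{JF}-\mu\mathrm{JG}$ when $x_p$ is an equilibrium, and in general we would assume it as a standing regularity property. Continuity yields a neighborhood $U_i$ of each $\lambda_i(\overline L_{22})$ on which $\Lambda<0$. By \eqref{eq:lemma_nm}, there is a $\kappa_2$ such that for $\kappa>\kappa_2$ the remaining $N-m$ eigenvalues of $M(\kappa)$ lie in $\bigcup_i U_i$. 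Taking $\kappa>\max(\kappa_1,\kappa_2)$ then gives $\max_{\lambda\in\sigma(M(\kappa))}\Lambda(\lambda)<0$. Proposition \ref{prop:1} concludes that the network is locally asymptotically controlled at this finite gain, so it is pinning controllable. Equivalently, the limit hypothesis of Corollary \ref{prop:2} holds.

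The main obstacle is the regularity of $\Lambda$. Both the extension of the type II condition to complex $\mu$ and the continuity needed to pass from the limit points in $\sigma(\overline L_{22})$ to nearby eigenvalues rely on it. I would state explicitly that $\Lambda$ is assumed continuous, and that type II is meant on half-planes $\Re(\mu)>\bar\mu$. I would also note that convergence of eigenvalues as a multiset, which follows from continuity of polynomial roots in the coefficients of $\Xi_2$, is what makes the neighborhood argument valid.
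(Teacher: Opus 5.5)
Your proposal is correct and follows the paper's route. Theorem~\ref{thm:asymptotic_spectrum} splits the spectrum, the type II property handles the $m$ diverging eigenvalues, the hypothesis on $\sigma(\overline L_{22})$ handles the remaining ones, and Corollary~\ref{prop:2} (equivalently Proposition~\ref{prop:1} at a large finite $\kappa$) concludes. Your explicit assumptions, continuity of $\Lambda$ and type II holding on the half-plane to the right of $\bar\mu$, supply what the paper's two-line proof leaves implicit.

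One slip, harmless under your half-plane reading: the diverging eigenvalues are $\kappa+\lambda_i(\overline L_{11})+o(1)$, not exactly $\kappa+\lambda_i(\overline L_{11})$, because $\Xi_2$ is in general singular at the zeros of $\Xi_1$. So their imaginary parts are bounded but not fixed.
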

\begin{proof}
Since network \eqref{eq:controlled_network} is type II, from Definition \ref{def:type2} it follows that $\lim_{\mu\rightarrow+\infty}\Lambda(\mu)<0$ and $\Lambda(\lambda_i(\overline L_{22}))<0$.
    From Theorem \ref{thm:asymptotic_spectrum}, and since $\Lambda(\lambda_i(\overline L_{22}))<0$ for all $i=1,\ldots,N-m$, Corollary \ref{prop:2} yields the thesis.
\end{proof}
\begin{remark}
Note that, even in the special instance of digraphs, the class of controlled network systems with
a type II master stability function is wider than the class of systems
that can be globally pinning controlled, see e.g. \cite{ancona2024percolation,delellis2018partial},  and references therein. For instance, the conditions in \cite{delellis2018partial} only
include QUAD dynamical systems, whereas type II master stability functions can also be exhibited by non-QUAD dynamical systems, such as the Lorenz system.
\end{remark}

The simplest scenario for a type II network is the classical consensus problem, generalized to the case of a directed many-body topology. Classical leader-follower consensus dynamics, similar to those in \cite{olfati2004consensus}, can be recovered from \eqref{eq:controlled_network} by setting $f=0$ and $g$ as the identity function. We can then obtain the following result:
\begin{coro}\label{thm:consensus}
    If $n=1$, $f=0$, $g$ is the identity, and $\lambda_i(\overline L_{22})>0$ for all $i=1,\ldots,N-m$, then there exists $\kappa$ such that network \eqref{eq:controlled_network} globally achieves consensus onto the pinner's initial condition.
\end{coro}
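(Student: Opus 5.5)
With $n=1$, $f=0$ and $g$ the identity, the controlled network \eqref{eq:controlled_network} is exactly linear, not just after linearization. Indeed, $\mathrm{JF}=0$ and $\mathrm{JG}=1$, so \eqref{eq:contr_mat_form} holds globally and reads
\[
\dot e=-M(\kappa)\,e,\qquad M(\kappa)=L+\kappa P .
\]
The pinner also satisfies $\dot x_p=0$, so $x_p(t)=x_{p0}$ for all $t$. Global consensus onto the pinner's initial condition is therefore equivalent to $e(t)\to 0$ for every initial condition. For this linear system that is the same as $-M(\kappa)$ being Hurwitz, i.e.\ $\Re\lambda>0$ for all $\lambda\in\sigma(M(\kappa))$.

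The plan is to identify the master stability function and then invoke Corollary \ref{cor:1}. With $n=1$, the master equation \eqref{eq:msf} is $\dot\xi=-\mu\xi$, so $\Lambda(\mu)=-\Re\mu$. Two consequences follow:
\begin{itemize}
\item The network is type II in the sense of Definition \ref{def:type2}, with $\bar\mu=0$.
\item The hypothesis $\lambda_i(\overline L_{22})>0$ gives $\Lambda(\lambda_i(\overline L_{22}))<0$ for all $i$.
\end{itemize}
Rather than routing through local stability, I would argue directly with Theorem \ref{thm:asymptotic_spectrum}, since only real parts matter here. By \eqref{eq:lemma_m}, the $m$ eigenvalues $\kappa+\lambda_i(\overline L_{11})$ have real part tending to $+\infty$. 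By \eqref{eq:lemma_nm}, the remaining $N-m$ eigenvalues converge to $\sigma(\overline L_{22})$, which lies in the open right half-plane. Continuity of the limits then gives a $\bar\kappa$ such that every eigenvalue of $M(\kappa)$ has strictly positive real part for all $\kappa>\bar\kappa$. For such $\kappa$, $-M(\kappa)$ is Hurwitz, so $e(t)\to0$ exponentially from any initial condition; since $x_p\equiv x_{p0}$, this gives $x_i(t)\to x_{p0}$ for all $i$.

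The main obstacle is making the limit statement \eqref{eq:lemma_nm} usable quantitatively. We need that, for large $\kappa$, exactly $N-m$ eigenvalues lie in arbitrarily small neighborhoods of $\sigma(\overline L_{22})$ and none escape elsewhere with nonpositive real part. I would close this gap from the factorization \eqref{eq:factorization}, bounding the remaining roots on a fixed compact set:
\begin{itemize}
\item Restrict $\lambda$ to a fixed compact set $K$ containing a neighborhood of $\sigma(\overline L_{22})$.
\item On $K$, the correction term $\overline L_{21}(\overline L_{11}+(\kappa-\lambda)I_m)^{-1}\overline L_{12}$ is $O(1/\kappa)$ uniformly.
\item Hence $\Xi_2$ converges uniformly on $K$ to $\det(\overline L_{22}-\lambda I)$.
\item Hurwitz's theorem (Rouché) then places exactly $N-m$ zeros of $\Xi_2$ near $\sigma(\overline L_{22})$, counted with multiplicity.
\item The zeros of $\Xi_1$ account for the other $m$ eigenvalues, whose real parts diverge.
\end{itemize}
Since the total count is $N$, this covers the whole spectrum, and the real-part positivity for all $\kappa$ large follows.
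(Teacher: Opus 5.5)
Your top-level argument is the paper's own. With $n=1$, $f=0$ and $g$ the identity, the error obeys exactly $\dot e=-M(\kappa)e$, and Theorem~\ref{thm:asymptotic_spectrum} plus continuity makes $-M(\kappa)$ Hurwitz for large $\kappa$. Your Rouch\'e/Hurwitz step on a fixed compact $K$ is also sound. For large $\kappa$, $\Xi_1$ has no zeros in $K$, so $\Xi_2$ is analytic on $K$ and its zeros there are exactly the eigenvalues of $M(\kappa)$; exactly $N-m$ of them lie near $\sigma(\overline L_{22})$.

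The gap is the final accounting step: ``the zeros of $\Xi_1$ account for the other $m$ eigenvalues.'' This is false in general. The factorization \eqref{eq:factorization} only holds where $\overline L_{11}+(\kappa-\lambda)I_m$ is invertible, and $\Xi_2$ has poles exactly at the zeros of $\Xi_1$, which typically cancel them. For example, take $L=\bigl(\begin{smallmatrix}1&-1\\-1&1\end{smallmatrix}\bigr)$ with node $1$ pinned, which satisfies the corollary's hypothesis. Then $\det(M(\kappa)-(1+\kappa)I_2)=-1\neq0$, so $\kappa+\lambda_1(\overline L_{11})=1+\kappa$ is never an eigenvalue. The paper's proof of Theorem~\ref{thm:asymptotic_spectrum} makes the same identification, so quoting \eqref{eq:lemma_m} in the form ``the eigenvalues are $\kappa+\lambda_i(\overline L_{11})$'' does not close the gap. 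After your Rouch\'e step you only know that the other $m$ eigenvalues lie outside $K$. That does not exclude the closed left half-plane, which is precisely the ``escape'' you set out to rule out.

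The repair is short. Take $\Re\lambda\le0$ and $\kappa>\|\overline L_{11}\|$. Then $|\kappa-\lambda|\ge\kappa$, so $\overline L_{11}+(\kappa-\lambda)I_m$ is invertible with inverse norm at most $(\kappa-\|\overline L_{11}\|)^{-1}$. Hence the Schur factorization is legitimate on the whole closed left half-plane. There, $\lambda\in\sigma(M(\kappa))$ if and only if $\overline L_{22}-\lambda I_{N-m}-E_\kappa(\lambda)$ is singular, where $E_\kappa(\lambda)=\overline L_{21}(\overline L_{11}+(\kappa-\lambda)I_m)^{-1}\overline L_{12}$ satisfies $\|E_\kappa(\lambda)\|\le\|\overline L_{21}\|\|\overline L_{12}\|/(\kappa-\|\overline L_{11}\|)$ uniformly. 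Because $\sigma(\overline L_{22})$ lies in the open right half-plane, $c=\sup_{\Re\lambda\le0}\|(\overline L_{22}-\lambda I_{N-m})^{-1}\|$ is finite: the resolvent is continuous there and decays as $|\lambda|\to\infty$. So once $c\,\|E_\kappa\|<1$, no eigenvalue of $M(\kappa)$ has nonpositive real part. This proves the corollary directly, with no eigenvalue counting.

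Alternatively, you can keep your count and locate the remaining eigenvalues separately. Continuity of the spectrum of $\kappa^{-1}\overline M(\kappa)=\overline P+\kappa^{-1}\overline L\to\overline P$ shows that, for large $\kappa$, exactly $m$ eigenvalues satisfy $|\lambda-\kappa|<\kappa/2$, hence $\Re\lambda>\kappa/2$. The other $N-m$ satisfy $|\lambda|<\kappa/2$. Since $K$ is disjoint from the first region for large $\kappa$, these must be your $N-m$ eigenvalues in $K$, and together they exhaust the spectrum.
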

\begin{proof}
    When $n=1$, $f=0$, the pinner dynamics is constant, that is, $x_p(t)=x_p(0)$ for all $t$. Moreover, as $g$ is the identity, the network dynamics become
    \begin{equation}
        \dot e(t)=-M(\kappa)e(t)
    \end{equation}
From Theorem \ref{thm:asymptotic_spectrum}, and from the continuity of the eigenvalues of $M(\kappa)$ with respect to $\kappa$, the thesis follows. 
\end{proof}

% \color{red}
% PRIMA CONGETTURA DI DE LELLIS: \\
% Una condizione sufficiente sul numero di misure da fare è fare tante misure quanti sono gli autovalori non positivi di $L$.\\
% \color{blue}
% APPENDICE ALLA PRIMA CONGETTURA DI DE LELLIS\\
% Si consideri la rete di sette nodi connessi ad anello nearest neighbour con iperarchi di ordine 3. La Laplaciana del grafo signed associato all'ipergrafo è
% \[
% \begin{bmatrix}
% 1 & -1 & 0.5 & 0 & 0 & 0.5 & -1 \\
% -1 & 1 & -1 & 0.5 & 0 & 0 & 0.5 \\
% 0.5 & -1 & 1 & -1 & 0.5 & 0 & 0 \\
% 0 & 0.5 & -1 & 1 & -1 & 0.5 & 0 \\
% 0 & 0 & 0.5 & -1 & 1 & -1 & 0.5 \\
% 0.5 & 0 & 0 & 0.5 & -1 & 1 & -1 \\
% -1 & 0.5 & 0 & 0 & 0.5 & -1 & 1
% \end{bmatrix}
% \]
% che ha 3 autovalori non positivi, uno 0 e uno con molteplicità due -0.4695.

% Per esaustione si può mostrare che, nonostante siano solo 3 gli autovalori da stabilizzare, non è possibile, pinnando 3 nodi con un arco semplice controllare il sistema (tutti i grounded laplacian togliendo tre righe e tre colonne presentano autovalori non positivi).

% Interessantemente, utilizzando per il pinning una mistura di archi e iperarchi, ad esempio con $\mathcal{P}=\{(1,p),(\{2,3\},p),(\{4,5,6\},p)\}$ il problema di controllo viene risolto. 

% \color{red}

% Una condizione necessaria e sufficiente perché con un dato set di iperarchi/misure riusciamo a controllare la rete è che la ``grounded Laplacian'' (in realtà quella dopo la trasformazione) associata al blocco 0 di $P_{\mathrm{pot}}$ abbia solo autovalori positivi. 
 % \color{black}

\subsection{Select observations}
The application of Corollary \ref{cor:1} yields some important facts about the pinning controllability of network \eqref{eq:controlled_network}, which can be effectively illustrated by means of the following examples. For ease of illustration, we refer to the leader-follower consensus dynamics considered in Theorem \ref{thm:consensus}, so that we can focus our discussion on the spectrum of $M(\kappa)$. In addition, in all examples, we consider homogeneous coupling weights according to Definition \ref{rem:homo}. 

\begin{figure}
\centering
\includegraphics[width=.8\columnwidth]{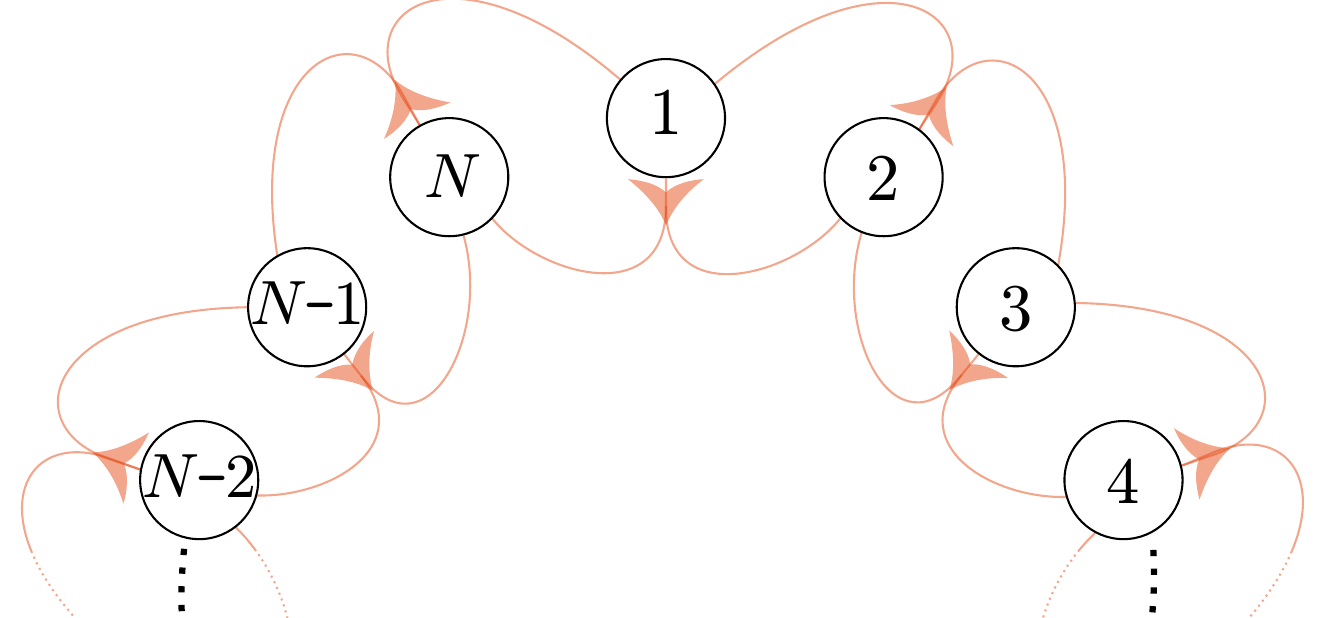}
\caption{Directed three-body nearest neighbor hypergraph.}
\label{fig:nn}
\end{figure}

\textit{Observation 1: using directed hyperedges for pinning may outperform the use of directed edges.} 

\textit{Example 1.} Let us consider an uncontrolled network $\mathscr H_c$ of $N=7$ nodes coupled on a three-body nearest neighbor directed hypergraph (for a description of this topology, see Figure \ref{fig:nn}). Also, let us assume that we can only take three measurements of this network, that is, $|\mathcal E_{\mathrm{pin}}|=3$. Interestingly, we note that if we limit our selection to directed edges as in standard pairwise pinning control (i.e. we individually measure the state of three nodes), there is no selection of three nodes such that $\lambda_i(\overline L_{22})>0$ for all $i=1,\ldots,4$. Numerical explorations confirm that it is impossible to control the network with three standard edges, as illustrated for a representative selection of the pinning nodes in Figure \ref{fig:hyp_better_7}(a). Nonetheless, taking three aggregated measurements we can make all the four eigenvalues of $\overline L_{22}$ positive, thereby allowing the pinner to successfully drive the followers to the reference consensus value, see Figure \ref{fig:hyp_better_7}(b).

\begin{figure}
    \centering
    \includegraphics[width=\linewidth]{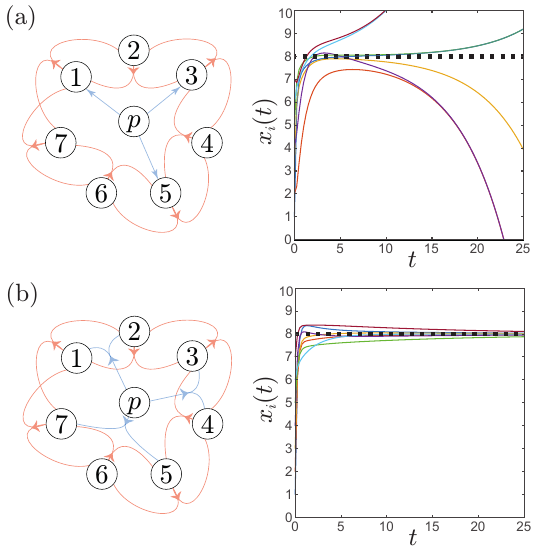}
    \caption{Leader-follower consensus dynamics over a 7-node three-body nearest-neighbor hypergraph. In panel (a), nodes 1, 3, and 5 are controlled through standard, pairwise pinning; in panel (b), each pinning hyperedge corresponds to measuring the average state of two nodes, whereby $\mathcal H(\varepsilon_1^p)=\{1,2\}$, $\mathcal H(\varepsilon_2^p)=\{3,4\}$, $\mathcal H(\varepsilon_3^p)=\{5,7\}$. The left panels depict the topology of the controlled network, whereas the right panels report the state dynamics when $x_i(0)=i,\,i=1,\ldots,7$, $x_{p0}=8$, $\sigma_\varepsilon=1$ for all $\varepsilon\in\mathcal E_c$, and $\kappa=5$, with a thick dashed line identifying the pinner's trajectory. }
    \label{fig:hyp_better_7}
\end{figure}

\textit{Observation 2: in digraphs, leader-follower consensus is attained by pinning a number of nodes equal to the number of non-positive eigenvalues of the Laplacian. The same does not hold for directed hypergraphs.}

\textit{Example 2.} Let us consider the 3-node uncontrolled hypergraph $\mathscr H_c$ in Figure \ref{fig:sipuofare}. Since all the rows sum to zero, and the second and third rows are identical, the Laplacian matrix of the signed graph $\mathscr S(\mathscr H_c)$ associated to $\mathscr H_c$
has two 0 eigenvalues, whereas the third will be positive and equal to 2. Nonetheless, it is possible to control it through only one measurement, that is, $|\mathcal{E}_{\mathrm{pin}}|=1$. For instance, by only controlling node 3, $\overline L_{22}$ has eigenvalues 1 and $0.5$.

\begin{figure}[htb!]
\centering
\begin{tabular}{cc}
\raisebox{-0.65\height}{\includegraphics[width=.2\columnwidth]{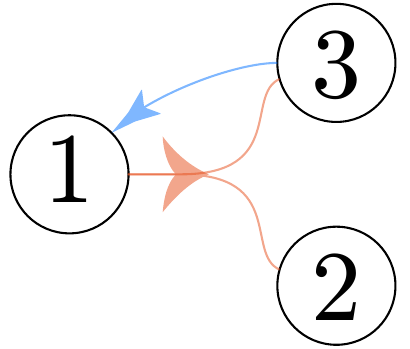}} &
\raisebox{-0.5\height}{
$
L=
\begin{bmatrix}
 1 &  0 & -1   \\
 -1 &  1/2 & 1/2  \\
 -1 &  1/2 & 1/2 
\end{bmatrix}
$
}
\end{tabular}
\caption{Network topology of the controlled network considered in Example 2.}
\label{fig:sipuofare}
\end{figure}

\textit{Example 3.}  Let us consider an uncontrolled network $\mathscr H_c$ of $N=15$ nodes coupled on a three-body nearest neighbor directed hypergraph. The Laplacian matrix of the signed graph $\mathscr S(\mathscr H_c)$ associated to $\mathscr H_c$ has 7 non-positive eigenvalues, but there does not exist any pinning configuration with 7 measurements that can make the eigenvalues of $\overline L_{22}$ all positive, according to our exhaustive numerical search.

\section{Selection of the pinning hyperedges}
%\color{black}
The addition of a new pinning hyperedge assumes the ability of performing an additional measurement, and specifically the measurement of an aggregated state of the heads of the hyperedge. This additional measurement comes at a cost (e.g. the cost of adding an additional sensor), therefore the problem arises of minimizing the number of measurements required to control the network, that is, minimizing the number of pinning hyperedges.

In this vein, we can interpret the set $\mathcal E_{\mathrm{pin}}$ as the set of \textit{potential pinning hyperedges}, that is, the set of possible measurements that can be taken on the hypergraph,
and formulate the following optimization problem:
\begin{equation}\label{eq:optimal}
\begin{aligned}
    &\min \left|\mathcal E_{\mathrm{pin}}^{\mathrm{sub}}\right|\\
    &\text{s.t.}\\
    &\mathcal E_{\mathrm{pin}}^{\mathrm{sub}}\subseteq \mathcal E_{\mathrm{pin}}\\
    &\lim_{\kappa\rightarrow+\infty}\max_{\lambda\in\sigma(M_{\mathrm{sub}})}\Lambda(\lambda)<0,
\end{aligned}
\end{equation}
where $M_{\mathrm{sub}}=L+\kappa P_{\mathrm{sub}}$, with $P_{\mathrm{sub}}$ being the pinning matrix associated to $\mathcal E_{\mathrm{pin}}^{\mathrm{sub}}$.

The optimization problem \eqref{eq:optimal} can in principle be solved by means of Corollary \ref{cor:1}, however it turns out to be combinatorial, whereby all of its possible solutions correspond to $2^m$, where we remind that $m=|\mathcal E_{\mathrm{pin}}|$.

In what follows, we propose a greedy heuristic grounded in Corollary \ref{cor:1} that allows for a computationally efficient selection of the pinning hyperedges.

\subsection*{Heuristic for control}

To heuristically solve the problem, we propose an algorithm that iteratively adds to $\mathcal E_{\mathrm{pin}}^{\mathrm{sub}}$
one of the possible pinning hyperedges in $\mathcal E_{\mathrm{pin}}$. Specifically, at every iteration, we add the pinning hyperedge that maximizes a linear combination between the number of the eigenvalues $\lambda$ of matrix $M$ such that $\Lambda(\lambda)\ge 0$ and the related $\Lambda(\lambda)$ value.
Before using the algorithm, one should preliminarily check the feasibility of the problem by means of Corollary \ref{cor:1}. Specifically, one should verify that selecting $\mathcal E_{\mathrm{pin}}^{\mathrm{sub}}=\mathcal E_{\mathrm{pin}}$, $\Lambda(\lambda_i(\overline L_{22}))<0$ for all $i=1,\ldots,N-m$.

Once this preliminary check is introduced, one could apply the algorithm that we describe in the following:
\begin{enumerate}
\setcounter{enumi}{-1} 
    \item Initialization. 
    %Using Corollary \ref{cor:1} check that, selecting $\mathcal E_{\mathrm{pin}}^{\mathrm{sub}}=\mathcal E_{\mathrm{pin}}$, $\Lambda(\lambda_i(\overline L_{22}))<0$ for all $i=1,\ldots,N-m$. 
    Set $\mathcal E^{\mathrm{sub}}_{\mathrm{pin}}=\emptyset$ and step $k=0$.
    % \item At the $k$-th step, compute $\mathcal U^k=\{\lambda\in\spec(\overline L_{22}^k):\Lambda(\lambda)\ge 0\}$, 
    % where $\overline L_{22}^k$ is the $(N-k)\times(N-k)$ matrix obtained by removing the rows and columns of $\overline L$ corresponding to the  hyperedges in $\mathcal E^{\mathrm{sub}}_{\mathrm{pin}}$, 
    % that is, the set of eigenvalues of $L_{22}^k$ for which the master stability function is positive. We then compute 
    % \[
    % J^k=\sum_{\lambda\in\mathcal U^k}\Lambda(\lambda) + |\mathcal U^k|.
    % \] 
    \item For each $\varepsilon_i\in\mathcal E_{\mathrm{pin}}^{\mathrm{rem}}=\mathcal E_{\mathrm{pin}}\setminus \mathcal E^{\mathrm{sub}}_{\mathrm{pin}}$, compute 
    \[
    J^k_i=\sum_{\lambda\in\mathcal U^k_i} \Lambda(\lambda) + |\mathcal U^k_i|,
    \]
    where $\mathcal U^k_i=\{\lambda\in\spec(\overline L_{22}^{k,i}):\Lambda(\lambda)\ge 0\}$, with $\overline L_{22}^{k,i}$ is the $(N-k-1)\times(N-k-1)$ matrix obtained by removing the rows and columns of $\overline L$ corresponding to the  hyperedges in $\mathcal E^{\mathrm{sub}}_{\mathrm{pin}}\cup \{\varepsilon_i\}$, for $i=1,\ldots,|\mathcal E_{\mathrm{pin}}^{\mathrm{rem}}|$.
    
    \item Compute 
    \[
    i^\star=\argmin_i J^k_i
    \]
    and add $\varepsilon_{i^\star}$ to $\mathcal E_{\mathrm{pin}}^{\mathrm{sub}}$. If $J^k_{i^\star}=0$, terminate the algorithm; otherwise, set $k=k+1$ and go to point 1.
\end{enumerate}

\section{Numerical applications}
Here, we test the performance of the proposed heuristic for control. Specifically, in Section \ref{subsec:lead_foll} we compare it against the pinning strategy proposed in \cite{de2022pinning}, a random sequential selection of the pinning hyperedges in a leader-follower consensus problem, and a selection of the nodes with the largest difference between out- and in-degree. As benchmark hypergraph topology, we focus on directed three-body nearest neighbor hypergraphs, and on random hypergraphs. Then, in Section \ref{subsec:lorenz}, we apply the proposed heuristic to control a network of Lorenz chaotic systems. 
\subsection{Leader-follower consensus}\label{subsec:lead_foll}
\subsubsection*{Directed three-body nearest-neighbor} as a testbed topology for the controlled network, we considered the directed three-body nearest neighbor hypergraph illustrated in Figure \ref{fig:nn}, which is particularly challenging for control. Indeed, even in the case of consensus, that is, when we set $f=0$, $n=1$ and $g$ the identity function in \eqref{eq:uncontrolled_network}, instabilities in the uncontrolled network arise for network sizes larger than 4, as illustrated in the example with $N=6$ in Figure \ref{fig:consensus}.

\begin{figure}
    \centering
    \includegraphics[width=0.9\linewidth]{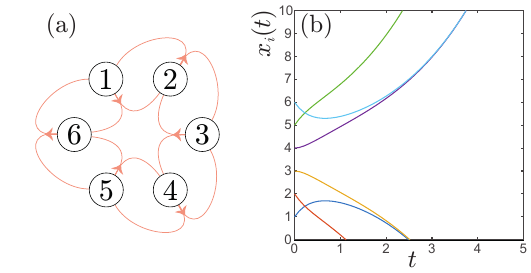}
    \caption{Consensus dynamics in the uncontrolled network \eqref{eq:uncontrolled_network} (with $f=0$, $g(z)=z$, $n=1$). Panel (a) depicts the network topology, which is a 6-node directed three-body nearest neighbor, whereas panel (b) reports the state dynamics when $x_i(0)=i,\,i=1,\ldots,6$ and $\sigma_\varepsilon=1$.}
    \label{fig:consensus}
\end{figure}

Specifically, we consider networks of different size, and vary $N$ between 5 and 20 with step 1. We start by considering the case in which the pinner can potentially measure the state of each node, that is, $\mathcal E_{\mathrm{pin}}=\{\varepsilon_1^p,\varepsilon_2^p,\ldots,\varepsilon_N^p\}$, with $\mathcal H(\varepsilon_i^p)=\{i\}$ for all $i$. For each $N$, we record the minimum number of measurements required to control the network through exhaustive search and compare it with the number obtained by our greedy control heuristic, by the heuristic proposed in \cite{de2022pinning}, and by randomly adding pinning hyperedges, as illustrated in Table \ref{tab:placeholder}. We note that the proposed greedy heuristic strongly outperforms the alternative heuristics. Moreover, it closely matches the performance of an exhaustive search, whereby, for 11 values of $N$ out of 16, the number of selected pinning hyperedges coincides.

To further delve into this finding, for $N=10$ we compare the performance of our greedy heuristic against an exhaustive search considering all possible selections of $\mathcal E_{\mathrm{pin}}$. Namely, each selection corresponds to considering one of the partitions of the node set as the heads of the potential pinning hyperedges, for a total of $115975$ partitions. Out of these partitions, we only consider the $10475$ that fulfill Corollary \ref{cor:1} so that the network can be pinning controlled.
Our control heuristic in $87\%$ of the cases selects the minimum number of pinning hyperedges, and never adds more than one extra hyperedge.

\begin{table}[htbp!]
\definecolor{lightgray}{gray}{0.92}
\centering

\caption{Comparing alternative pinning strategies with respect to the cardinality of $\mathcal E_{\mathrm{pin}}^{\mathrm{sub}}$ on directed nearest-neighbor three-body hypergraphs. From left to right, the first four columns report the number of nodes $N$, the smallest cardinality of $\mathcal E_{\mathrm{pin}}^{\mathrm{sub}}$ obtained through exhaustive search, the one obtained through our proposed heuristic, and according to \cite{de2022pinning}, respectively. The last three columns report the average cardinality of $\mathcal E_{\mathrm{pin}}^{\mathrm{sub}}$ when new pinning hyperedges are added randomly, the probability of obtaining the minimal cardinality, and that of obtaining the second smallest cardinality, respectively.}

\begin{tabular}{ccccr@{\hspace{6pt}}r@{\hspace{6pt}}r}
\textbf{$N$} & \textbf{min} & \textbf{heuristic} & \textbf{\cite{de2022pinning}} & \multicolumn{3}{c}{\textbf{random}} \\
\midrule

5  & 3 & 3 & 3  & $3.55   $   & $\mathit{P}(3)=0.45, $ & $\mathit{P}(4)=0.55$ \\ \rowcolor{lightgray}    
6  & 3 & 3 & 5  & $4.29   $   & $\mathit{P}(3)=0.13, $ & $\mathit{P}(4)=0.45$ \\
7  & 4 & 4 & 6  & $5.13   $   & $\mathit{P}(4)=0.23, $ & $\mathit{P}(5)=0.41$ \\ \rowcolor{lightgray}    
8  & 5 & 5 & 5  & $5.94   $   & $\mathit{P}(5)=0.26, $ & $\mathit{P}(6)=0.54$ \\
9  & 5 & 5 & 7  & $6.81   $   & $\mathit{P}(5)=0.08, $ & $\mathit{P}(6)=0.29$ \\ \rowcolor{lightgray}    
10 & 5 & 6 & 7  & $7.79   $   & $\mathit{P}(5)<0.01, $ & $\mathit{P}(6)=0.13$ \\
11 & 6 & 7 & 8  & $8.48   $   & $\mathit{P}(6)=0.01, $ & $\mathit{P}(7)=0.18$ \\ \rowcolor{lightgray}     
12 & 7 & 7 & 8  & $9.39   $   & $\mathit{P}(7)=0.05, $ & $\mathit{P}(8)=0.14$ \\
13 & 7 & 7 & 11 & $10.26  $   & $\mathit{P}(7)=0.01, $ & $\mathit{P}(8)=0.06$ \\ \rowcolor{lightgray}    
14 & 7 & 7 & 11 & $11.13  $   & $\mathit{P}(7)<0.01, $ & $\mathit{P}(8)=0.02$ \\
15 & 8 & 9 & 12 & $11.75  $   & $\mathit{P}(9)=0.03, $ & $\mathit{P}(10)=0.17$ \\ \rowcolor{lightgray}    
16 & 9 & 9 & 12 & $12.89  $   & $\mathit{P}(9)=0.02, $ & $\mathit{P}(10)=0.02$ \\
17 & 9 & 9 & 14 & $13.95  $   & $\mathit{P}(9)<0.01,$ & $\mathit{P}(10)=0.02$ \\ \rowcolor{lightgray}  
18 & 9 & 9 & 13 & $14.38  $   & $\mathit{P}(9)<0.01,$ & $\mathit{P}(10)=0.01$ \\
19 & 10 & 11 & 14 & $15.91$   & $\mathit{P}(10)<0.01,$ & $\mathit{P}(11)<0.01$ \\ \rowcolor{lightgray}  
20 & 11 & 11 & 14 & $16.32$   & $\mathit{P}(11)<0.01,$ & $\mathit{P}(12)=0.01$ \\
\bottomrule
\end{tabular}
\label{tab:placeholder}
\end{table}

\subsubsection*{ER directed hypergraphs}
Next, we consider the ER directed hypergraphs introduced in \cite{della2023emergence}, and vary the parameter $p$ modulating its density between 0.01 and 0.02, and the maximum order $o$ from 3 to 6 with step 1. For each combination of the pair $(p,o)$, we generate 100 topologies and focus on its giant strongly connected component\footnote{A strongly connected component (SCC) in a hypergraph is defined as a maximal sub-hypergraph such that there is a path from any node to any other node \cite{gallo1993directed}. We call the largest SCC the giant SCC.} and record the average number of pinning edges (which coincide with the cardinality of $\mathcal E_{\mathrm{pin}}^{\mathrm{sub}}$) required to control it, and obtained by i) exhaustive search, ii) our greedy heuristic, iii) the heuristic of \cite{de2022pinning}, iv) a selection of the nodes with the largest difference between out- and in-degree (defined as in Section \ref{subsec:directed_hyper}), and v) a random selection.

The first observation is that, even for 100 node networks, it becomes computationally demanding to determine the nodes to be pinned by exhaustive search (in the sense that for a single network it would scale with $N^{3+|\mathcal E_{\mathrm{pin}}^{\mathrm{sub}}|}$).  Therefore, in our exhaustive search we stopped at exploring pinning selections such that $|\mathcal E_{\mathrm{pin}}^{\mathrm{sub}}|\le 4$. As a consequence, for $p=0.01$ and $o\leq 5$, we only obtained a lower bound for the smallest percentage of nodes that need to be pinned, as highlighted in Table \ref{tab:placeholder2}. Nonetheless, our greedy heuristic closely matches the performance of an exhaustive search. Also, it clearly outperforms the alternative heuristic proposed in \cite{de2022pinning}, and a purely topological selection criterion based on the difference between out- and in-degree. Predictably, a random selection yields the worst performance.

% \textcolor{red}{Magari, a rinforzare il fatto che le NN sono particolarmente difficili, si può far notare che nel caso peggiore per le ER abbiamo bisogno del 7\% della rete, mentre per le NN stiamo sempre sopra al 50\%...}

% \textcolor{red}{Qui secondo me si potrebbe anche sottolineare che se $o=2$ tutti gli algoritmi ci davano come risultato 1, per rinforzare come il problema sia fondamentale per gli ipergrafi (mentre risolto per i grafi). Inoltre, ci fa da ponte con la prossima sezione, in cui in realtà non esiste una soluzione nemmeno per i grafi...}

% \begin{table}[htbp]
% \definecolor{lightgray}{gray}{0.92}
% \centering
% \begin{tabular}{cccccccc}
% \textbf{$p$} & \textbf{$o$} & $\langle N \rangle$ &\textbf{min} & \textbf{heuristic} & \textbf{\cite{de2022pinning}} & \textbf{$d^{\mathrm{out}}-d^{\mathrm{in}}$} &
% \textbf{random} \\
% \midrule
% 0.01 & 3 & 75.4 & 5.4 (20) & 5.5 & 9.9 & 35.3 & 54.9    \\ \rowcolor{lightgray}    
% 0.02 & 3 & 97.5 & 1.1 (100) & 1.1 & 1.6 & 5.8 & 10.4  \\
% 0.01 & 4 & 93.4 & 2.9 (50) & 3.4 & 6.3 & 21.9 & 46.5  \\ \rowcolor{lightgray}    
% 0.02 & 4 & 99.4 & 1.0 (100) & 1.0 & 1.2 & 1.2 & 1.6  \\
% 0.01 & 5 & 97.7 & 1.6 (87) & 1.7 & 3.3 & 13.3 & 27.0  \\ \rowcolor{lightgray}    
% 0.02 & 5 & 100.0 & 1.0 (100) & 1.0 & 1.4 & 1.0 & 1.3  \\
% 0.01 & 6 & 99.3 & 1.1 (100) & 1.1 & 2.1 & 3.1 & 9.6  \\ \rowcolor{lightgray}
% 0.02 & 6 & 100.0 & 1.0 (100) & 1.0 & 1.3 & 1.0 & 1.3  \\ 
% \bottomrule
% \end{tabular}
% \caption{qui riporto i numeri}
% \label{tab:placeholder}
% \end{table}

\begin{table}[htbp!]
\definecolor{lightgray}{gray}{0.92}
\centering

\caption{Comparing alternative pinning strategies with respect to the cardinality of $\mathcal E_{\mathrm{pin}}^{\mathrm{sub}}$ on random directed ER hypergraphs. From left to right, the first three columns report the parameter $p$ of the ER hypergraph, the maximum order $o$ of the hypergraph, the average size (over 100 repetitions) of the giant strongly connected component, respectively. The last five columns report the smallest percentage of nodes in $\mathcal E_{\mathrm{pin}}^{\mathrm{sub}}$ to pinning control the network obtained by exhaustive search (or a lower bound when computation is unfeasible), when new pinning hyperedges are added using the proposed greedy heuristic, according to \cite{de2022pinning}, maximizing $\Delta d= d^{\mathrm{out}}-d^{\mathrm{in}}$, or randomly, respectively.}

\begin{tabular}{cccccccc}
\textbf{$p$} & \textbf{$o$} & $\langle N \rangle$ &\textbf{min} & \textbf{heuristic} & \textbf{\cite{de2022pinning}} & \textbf{$\Delta d$} &
\textbf{random} \\
\midrule
0.01 & 3 & 75.5 & $\ge$5.5 (42) & 7.3 & 12.6 & 46.2 & 74.6    \\ \rowcolor{lightgray}    
0.02 & 3 & 97.5 & 1.2 (100) & 1.2 & 1.7 & 6.0 & 10.7 \\
0.01 & 4 & 93.1 & $\ge$2.7 (90) & 3.2 & 7.1 & 27.9 & 50.0   \\ \rowcolor{lightgray}    
0.02 & 4 & 99.4 & 1.0 (100) & 1.0 & 1.2 & 1.2 & 1.6  \\
0.01 & 5 & 97.5 & $\ge$1.6 (98) & 1.7 & 4.0 & 13.4 & 26.7  \\ 
\rowcolor{lightgray}    
0.02 & 5 & 100.0 & 1.0 (100) & 1.0 & 1.4 & 1.0 & 1.3  \\
0.01 & 6 & 99.3 & 1.1 (100) & 1.1 & 2.1 & 3.1 & 9.7  \\ \rowcolor{lightgray}
0.02 & 6 & 100.0 & 1.0 (100) & 1.0 & 1.3 & 1.0 & 1.3  \\ 
\bottomrule
\end{tabular}

\label{tab:placeholder2}
\end{table}

\subsection{Pinning synchronization of Lorenz systems}\label{subsec:lorenz}
Here, we show how the proposed heuristic can be used to select the nodes to be pinned in a network of nonlinear dynamical systems.
\subsubsection{Individual dynamics}
the nodes are Lorenz systems \cite{lorenz1963deterministic}, that is, in \eqref{eq:controlled_network} we set the vector field $f$ to
\begin{equation}
\label{eq:Lorenz}    
f(x_i)=\left[
\begin{array}{c}
\mathfrak{s} (x_{i2}-x_{i1}) \\[-1mm]
\mathfrak{s} x_{i1}-x_{i2}-x_{i1} x_{i3}\\ [-1mm]
x_{i1} x_{i2}-\mathfrak{b}(x_{i3} + \mathfrak{p}+\mathfrak{s})
\end{array}\right],
\end{equation}
where the parameters $\mathfrak{b}=8/3$, $\mathfrak{p}=28$, and $\mathfrak{s}=10$ are selected so that, in the absence of coupling and control, the node dynamics admits a chaotic attractor. 

\subsubsection{Coupling protocol}
we consider that the nodes are coupled through the interaction function $g(x)=\arctan(x)$. As illustrated in panels (a) and (b) in Figure \ref{fig:lorenz_control}, the controlled network \eqref{eq:controlled_network} is type II according to Definition \ref{def:type2}.
\subsubsection{Hypergraph topology and coupling gain}
we consider a strongly connected directed ER hypergraph as introduced in \cite{della2023emergence}, with $N=100$ nodes, parameter $p=0.01$, and maximum order of interaction $o=4$. Selecting as coupling gain $\sigma_\varepsilon=30$ for all $\varepsilon\in \mathcal E_c$, we note that there are 10 eigenvalues $\lambda\in\sigma(L)$ such that $\Lambda(\lambda)>0$, see Figure \ref{fig:lorenz_control}(a), where each white cross corresponds to at least one eigenvalue, with some of the eigenvalues being repeated.  

%in modo che l'interazione tra i nodi sia abbastanza forte per vincere la tendenza dei sistemi caotici di divergere. Alcuni autovalori della rete (quelli con parte reale minore di 40) sono mostrati nel pannello \ref{fig:lorenz_control}a della figura con delle x bianche: 5 autovalori sono minori o uguali a zero, e 10 autovalori hanno associato una $\Lambda>0$. 

\subsubsection{Selection of the pinning hyperedges} by applying our heuristic, we identify 14 nodes that need to be pinned to guarantee that $\Lambda_{\max}<0$ for a sufficiently large control gain. Figure \ref{fig:lorenz_control}(b) illustrates how the eigenvalues of $P$ are shifted so that $\Lambda_{\max}=-0.08<0$. Using this selection of pinned nodes, we then run a simulation in which we take initial conditions randomly extracted from a Gaussian with zero mean and variance 100, and set the control gain as $\kappa=60$.We observe how all the state trajectories converge to each other as the error norm approaches zero, as illustrated in Figure \ref{fig:lorenz_control}(c)-(d).

% Applichaimo la nostra euristica. La rete si riesce a controllare pinnando 14 nodi, dove andranno a convergere gli autovalori quando il guadagno va all'infinito (sempre quelli con Re<40) è mostrato \ref{fig:lorenz_control}b. Il massimo valore di $\Lambda$ valutato tra gli autovalori di $L_{22}$ è -0.008. 
% Nei pannelli \ref{fig:lorenz_control}c e \ref{fig:lorenz_control}d riportiamo una simulazione del sistema fatta partendo da condizioni iniziali casuali (estratte da una normale di media nulla e varianza 100), e selezionando $\kappa=60$. Nel pannello c riportiamo l'andamento della variabile 1 dei 100 sistemi, mentre nel pannello d riportiamo (in scala logaritmica) la norma dell'errore di pinning per ciascuno dei nodi.  

\begin{figure}[tbp!]
    \centering
    \includegraphics[width=\linewidth]{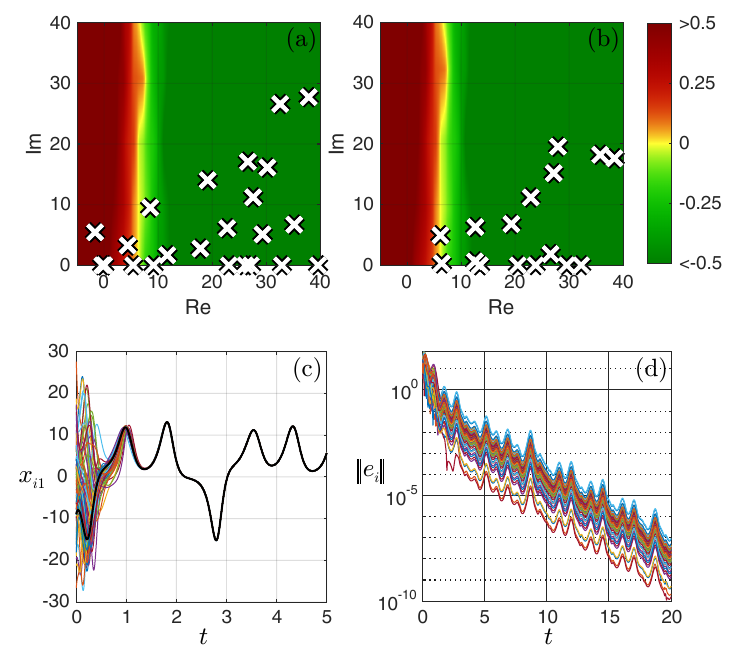}
    \caption{Networks of $N=100$ Lorenz systems coupled through a directed ER hypergraph with parameters $p=0.01$ and $o=4$. Panels (a) and (b) superimpose the eigenvalues of $L$ and $M$ to the Master Stability Function obtained for $g(z)=\arctan z$, respectively. Each eigenvalue is associated to a white cross, some of them are repeated, and some other (associated with negative values of $\Lambda$) are not in the plot as they have real or imaginary part larger than $40$. Panels (c) and (d) depict the time evolution of the first state variable of each system and of the node error norm, respectively.}
    \label{fig:lorenz_control}
\end{figure}

\section{Conclusions}

%\color{black}
In this paper, we focused on the pinning control problem for networks of systems coupled through higher-order interactions that can be modeled through directed hypergraphs. In particular, we have considered the optimal control problem of minimizing the number of measurements required to steer the network nodes towards the desired trajectory set by the pinner. We have modeled each measurement as a pinning hyperedge, and then translated the problem into the minimization of the number of pinning hyperedges required to control the network, a problem that had no solution in the literature.

To solve this problem, we have studied the limit behavior of the eigenvalues of an extended Laplacian-like matrix, which captures the topology of the controlled hypergraph together with the select pinning configuration. For the wide class of networks having a type II Master Stability Function, this allowed us to find the minimal set of measurements (pinning hyperedges) required to control the entire network. We have shown that the use of pinning hyperedges instead of standard edges can reduce the number of measurements required.

Since the optimal solution is computationally daunting for large networks, we have proposed an efficient heuristic to find a mathematically grounded suboptimal solution. We have shown how this solution closely matches the optimal one on testbed directed hypergraphs. Moreover, when used for the selection of standard pinning edges (instead of hyperedges), it strongly outperforms the only pre-existing heuristic \cite{de2022pinning}, as well as alternative, purely topological strategies.

%INIZIO RAGIONAMENTO FRANCESCO

%Se capisco bene, noi vorremmo risolvere in $\tilde v$ l'equazione
%\begin{equation}\label{eq:principale}
%    (P-\labmda_1I)\tilde v = (L-\tilde \lambda I)v_0
%\end{equation}
%sottoposto a
%\begin{equation}
%    (P-\lambda_1I)v_0 = 
%\end{equation}
%ossia che $v_0$ è un autovettore di $P$ associato a $\lambda_1$. Ora, consideriamo il caso di grafi normali, e assumiamo che $\lambda_1=1$. In questo caso sappiamo benissimo che $v_0$ appartiene allo span dei nodi pinnati e quindi all'autospazio associato all'autovalore nullo di $L$. Ponendo quindi $\tilde \lambda = 0$ direi che l'equazione è sempre verificata per $\tilde v = 0$. Ora consideriamo il caso $\lambda_1=0$. In questo caso $v_0$ appartiene allo span dei nodi non pinnati, e genericamente oserei dire che non è facile sapere a che sottospazio appartenga $(L-\tilde \lambda I)v_0$, salvo nel caso in cui $\tilde \lambda = 0$, nel qual caso sicuramente $(L-\tilde \lambda I)v_0$ rimane nello span dei nodi non pinnati. Tuttavia, in questo caso mi sembra che $(P-\labmda_1I)\tilde v$ appartenga necessariamente al sottospazio ortogonale a quello dei nodi non pinnati in quanto questo sottospazio mi pare proprio il nummo di $(P-\labmda_1I)$.

\bibliographystyle{IEEEtran}

\end{document}